\documentclass[letterpaper, 10 pt, conference]{ieeeconf}  % Comment this line out
                                                          % if you need a4paper
%\documentclass[a4paper, 10pt, conference]{ieeeconf}      % Use this line for a4
                                                          % paper

\IEEEoverridecommandlockouts                              % This command is only
                                                          % needed if you want to
                                                          % use the \thanks command
\overrideIEEEmargins
% See the \addtolength command later in the file to balance the column lengths
% on the last page of the document

% The following packages can be found on http:\\www.ctan.org
%\usepackage{graphics} % for pdf, bitmapped graphics files
%\usepackage{epsfig} % for postscript graphics files
%\usepackage{mathptmx} % assumes new font selection scheme installed
%\usepackage{times} % assumes new font selection scheme installed
\usepackage{amsmath} % assumes amsmath package installed
\usepackage{amssymb}  % assumes amsmath package installed
\usepackage{mathtools}
\usepackage{./TLC}
\usepackage{physics} % include this after TLC.sty for Im and Re to be redefined
\usepackage{blox}
\usepackage{flushend}
\usepackage{tikz,pgfplots}
\usepgfplotslibrary{external}
\tikzexternalize[prefix=tikz/]
\usepackage{shellesc}

% bibliography packages
\usepackage[style=ieee]{biblatex}
\addbibresource{./gmr.bib}
%\usepackage{cite}

% Colors
%\newcommand{\red}[1]{{\color{red}{#1}}}
%\newcommand{\blue}[1]{{\color{blue}{#1}}}
\definecolor{redp}{RGB}{252,141,98}
\definecolor{bluep}{RGB}{141,160,203}

\title{\LARGE \bf
Circuit Model Reduction with Scaled Relative Graphs
}

\author{Thomas Chaffey$^{1}$ \and Alberto Padoan$^{2}$%
\thanks{$^{1}$University of Cambridge, Department of Engineering, Trumpington Street,
        Cambridge CB2 1PZ, {\tt\small tlc37@cam.ac.uk}.}
        \thanks{%
        $^{2}$ETH Z\"urich,  Department of Information Technology and Electrical Engineering,  Physikstrasse 3,  8092,  Z\"urich,  Switzerland,  {\tt\small apadoan@ethz.ch}.}
        }
\begin{document}

\maketitle
\thispagestyle{empty}
\pagestyle{empty}

%%%%%%%%%%%%%%%%%%%%%%%%%%%%%%%%%%%%%%%%%%%%%%%%%%%%%%%%%%%%%%%%%%%%%%%%%%%%%%%%
\begin{abstract}
        Continued fractions are classical representations of complex objects
        (for example, real numbers) as sums and inverses of  simpler  objects (for
        example, integers).   The analogy in linear circuit theory is a chain of  series/parallel    one-ports:  the   
        port behavior is a continued fraction containing the port behaviors of its elements.
        Truncating  a   continued fraction is a classical method of
        approximation,   which   corresponds to deleting the circuit
        elements furthest from the port.  We apply this idea to   chains of series/parallel  
        one-ports composed of arbitrary nonlinear  relations.  This gives a model
        reduction method which automatically preserves properties such as incremental
        positivity.  The Scaled Relative
        Graph (SRG) gives a graphical representation of the original and truncated port
        behaviors.  The difference of these SRGs
        gives a bound on the approximation error, which is shown to be competitive
        with existing methods.  
\end{abstract}

\section{Introduction}

Continued fractions are classical in the theory of approximation
\autocite{jones1984continued}, and are closely related to Pad\'e approximants
\autocite{baker1996pade}, which have had a broad impact in areas such as 
theoretical physics \autocite{baker1970pade}, fluid mechanics \autocite{cabannes1976pade},  and control theory~\autocite{kalman1979partial,antoulas1986recursiveness,bultheel2000rational,antoulas2005approximation}.  
Continued fractions also have a long and rich history in linear circuit theory~\cite{newcomb1966linear}.  They have been
used extensively for synthesis and approximation,  beginning in the seminal works of Foster
\autocite{Foster1924},  Cauer \autocite{Cauer1926},   Bott,  Duffin
\autocite{Bott1949},   and  Kalman  \autocite{kalman1979partial},  among others. 
The Cauer normal forms for RC and RL circuits are
continued fractions of transfer functions \autocite{newcomb1966linear}, and  
the truncation of a continued fraction corresponds to deleting
elements from a series/parallel one-port.   The  nonlinear counterpart of this
fruitful circle of ideas, however, is largely unexplored.
In this context,    this paper proposes the truncation of a ``continued fraction'' of nonlinear relations as
a paradigm for model reduction of nonlinear series/parallel one-ports.  

The aim of model reduction is to approximate a complex model by a simpler one,
whilst retaining the important behavior.   In particular, one may require that
properties of the model (such as stability or passivity) are preserved by the
approximation. 
For linear systems,  the literature on the subject is vast, see, for example,
\autocite{antoulas2005approximation} and references therein.   In contrast,  the problem is largely open for
nonlinear systems.  Some progress has been made in~\autocite{scherpen1993balancing}, ~\autocite{astolfi2010model} and~\cite{padoan2022LSMR}, and 
in the recent papers~\autocite{besselink2013model} and~\cite{padoan2021mr}, % Alberto  
the Lur'e structure is exploited to reduce a nonlinear model,   while preserving incremental dissipativity properties.  A  nonlinear system is represented as a linear time invariant (LTI) state space model in feedback with a static nonlinearity.  The LTI component  is then   approximated
using standard methods, such as  balanced truncation and Hankel norm approximation  \autocite{antoulas2005approximation}.   
Although computationally effective,  this procedure  does not
leverage the structure of the underlying
physical system, and it is difficult, in general, to guarantee that the approximate
system exhibits desired properties, such as positivity (a close relative of passivity
\autocite[Lemma 2, p. 200]{Desoer1975}).  In contrast, we propose that a system be
modelled from the very beginning as an interconnection of physical components, and
the system be approximated by deleting the components which are least important.
Properties which are preserved by physical interconnection, such as positivity
\autocite[$\S5$, Chap. 6]{Desoer1975}, are then naturally retained in the
approximate system. 
The choice of electrical
terminology is purely a matter of preference: series/parallel electrical circuits
have analogies in domains such as mechanics, hydraulics and thermodynamics
\autocite{Smith2002,vanderSchaft2014}.

This paper proposes  the Scaled Relative Graph (SRG) as a tool for
quantifying the errors introduced by an approximation.  The SRG has
recently been introduced in the theory of convex optimization \autocite{Ryu2021}, and
allows simple, graphical proofs of algorithm convergence, and the derivation of tight
convergence bounds \autocite{Huang2020}.  The SRG gives a graphical representation of
the incremental behavior of a nonlinear operator, and generalizes the Nyquist diagram of an LTI
transfer function \autocite{Chaffey2021c}.  Interconnections of operators correspond
to graphical combinations of their SRGs \autocite{Ryu2021}, and applying this
graphical algebra to the study of feedback systems gives rise to  a nonlinear  
Nyquist criterion, which generalizes many existing results on incremental
input/output stability \autocite{Chaffey2021c, Chaffey2022}.  Properties such as
incremental gain and incremental positivity can be read directly from the SRG, and as
such the SRG may be used to measure the error introduced in such quantities.
Plotting an SRG for the error system, that is, the difference between a system and
its approximation, allows us to bound the incremental gain from input to
approximation error.  This bound is shown to compare favourably with other bounds in
the literature.

We begin this paper in Section~\ref{sec:illustrative} with a  motivating  example, 
which illustrates the main ideas.  We then define the model class and propose a truncation
procedure in Section~\ref{sec:truncation}.  Section~\ref{sec:verify} introduces the
SRG, and how it may be used to  certify approximation error bounds.   Equipped  with these tools, we
revisit the example circuit in Section~\ref{sec:example},  and compare our method with
the method described in~\autocite{Besselink2014}.    
Section~\ref{sec:conclusions} concludes the paper with a summary of our main results,
and poses open questions for future research.

\section{A motivating example}\label{sec:illustrative}

The running example of this paper is the circuit illustrated in
Figure~\ref{fig:example_circuit_1}.  $G_{RC}$ is the admittance of an LTI RC filter,
and $R$ is an arbitrary nonlinear resistor $v = R(i)$ which, for all $v_1, v_2, i_1, i_2$, satisfies the
incremental sector bound
\begin{IEEEeqnarray}{rCl}
        \mu \Delta i^2 \leq \Delta i \Delta v \leq \lambda \Delta i^2,\label{eq:sector}
\end{IEEEeqnarray}
for some $0 \leq \mu \leq \lambda$, where $\Delta i = i_1 - i_2$ and $\Delta v = v_1
- v_2$. 
\begin{figure}[ht]
        \centering
        \includegraphics{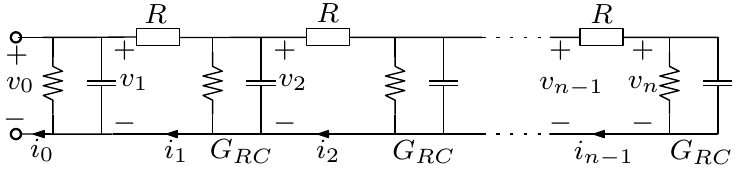}
        \caption{A nonlinear lattice circuit, configured as a one-port.}%
        \label{fig:example_circuit_1}
\end{figure}

The circuit consists of   a chain  $n$ repeated three element units, and an  additional RC filter at the port.  
A first attempt at approximating the circuit might simply be to remove the units
furthest from the port.  This corresponds to truncating a ``continued fraction'' in the circuit elements (to be
made precise in Section~\ref{sec:truncation}).  A better method might be to only
remove the capacitors furthest from the port,
 and resolve the remaining (linear and nonlinear) resistors into a single
nonlinear resistor.  This gives the same continued fraction truncation, with an
additional nonlinear resistance.
  If $R$ is LTI, one can show that the approximation error is always bounded 
by the $H_\infty$ norm of the original circuit's transfer function~\autocite{Srinivasan1997}.  
We generalize this result to the case where $R$ is nonlinear,
using SRGs. 

The SRG of a circuit of length $n$ is shown in Figure~\ref{fig:SRG_final}.  The value
$\lambda_n$ (defined in Section~\ref{sec:example}) bounds the incremental secant gain 
\autocite[$\S$2]{Sontag2006} of the circuit, and we will show that it also bounds the gain
in the error, $\norm{v - \hat v}/\norm{i}$, where $i$ is an input current, $v$ is the
output voltage of the original circuit and $\hat v$ is the output voltage of any
circuit with the last $r < n$ capacitors removed.

\begin{figure}[h]
        \centering
        \includegraphics{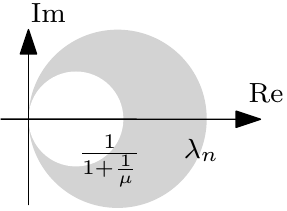}
        \caption{SRG for the current to voltage relation of the circuit of Figure~\ref{fig:example_circuit_1}.}%
        \label{fig:SRG_final}
\end{figure}

\section{Truncating series/parallel one-port circuits} \label{sec:truncation}

\subsection{Circuit elements as relations}
Let $L_2$ denote the set of finite energy signals $u: [0, \infty) \to \R$ such that
%\begin{IEEEeqnarray*}{rCl}
        ${\int_{0}^{\infty} |u^2(t)|\mathrm{d}t < \infty.}$ % Alberto
%\end{IEEEeqnarray*}
The inner product on $L_2$ is defined by
\begin{IEEEeqnarray*}{rCl}
        \bra{u}\ket{y} \coloneqq \int_{0}^{\infty} u(t)y(t) \mathrm{d}t,
\end{IEEEeqnarray*}
which induces the norm $\norm{u} \coloneqq \sqrt{\bra{u}\ket{u}}$.

We consider circuits formed by the parallel and series interconnection of one-port
elements.  A one-port has two terminals, across which a voltage $v$ is measured, and
through which a current $i$ flows.  We assume that these currents and voltages belong
to $L_2$, and a one-port $R$ is described by a relation on $L_2$, that is, a set $R \subseteq L_2\times L_2$ of
ordered voltage/current pairs.  If a one-port is described by a relation from voltage to
current, it is an \emph{admittance}, and if a one-port is described by a relation
from current to voltage, it is an \emph{impedance}.  If $(u, y) \in R$, we write $y
\in R(u)$.

The usual operations on functions can be extended to relations:
\begin{IEEEeqnarray*}{rCl}
        S^{-1} &=& \{ (y, u) \; | \; y \in S(u) \}\\
        S + R &=& \{ (x, y + z) \; | \; (x, y) \in S,\; (x, z) \in R \}\\
        SR &=& \{ (x, z) \; | \; \exists\, y \text{ s.t. } (x, y) \in R,\; (y, z) \in S \}.
\end{IEEEeqnarray*}
The relational inverse always exists, but is not an inverse in the usual sense -- in
particular, it is in general not the case that $R^{-1}R = I$.  If, however, $R$ is an
invertible function, its functional inverse coincides with its relational inverse, so
the notation $R^{-1}$ is not ambiguous. If $R$ is an impedance,   mapping  $i$ to $v$, then $R^{-1}$ is an admittance, mapping $v$ to $i$.

\begin{definition}
        A relation $R$ on $L_2$, mapping $u$ to $y$, is said to be
        \begin{enumerate}
                \item incrementally positive (or monotone) if 
                        %\begin{IEEEeqnarray*}{rCl}
                        $\bra{u_1 - u_2}\ket{y_1- y_2} \geq 0$ % Alberto
                        %\end{IEEEeqnarray*}
                        for all $u_1, u_2, y_1 \in R(u_1), y_2 \in R(u_2)$;
                \item $\mu$-input strictly incrementally positive (or $\mu$-coercive) if 
                        %\begin{IEEEeqnarray*}{rCl}
                        $\bra{u_1 - u_2}\ket{y_1- y_2} \geq \mu \norm{u_1 - u_2}^2$ % Alberto
                        %\end{IEEEeqnarray*}
                        for all $u_1, u_2, y_1 \in R(u_1), y_2 \in R(u_2)$;
                \item $1/\gamma$-output strictly incrementally positive (or $1/\gamma$-cocoercive) if 
                        %\begin{IEEEeqnarray*}{rCl}
                        $ \gamma \bra{u_1 - u_2}\ket{y_1- y_2} \geq \norm{y_1 - y_2}^2 $ % Alberto
                        %\end{IEEEeqnarray*}
                        for all $u_1, u_2, y_1 \in R(u_1), y_2 \in R(u_2)$.  $\gamma$
                        is called the \emph{incremental secant gain}.
                \item $R$ is said to have an incremental gain bound (or Lipschitz
                        constant) of $\lambda$ if
                        %\begin{IEEEeqnarray*}{rCl}
                        $ \norm{y_1- y_2} \leq \lambda \norm{u_1 - u_2} $ % Alberto
                        %\end{IEEEeqnarray*}
                        for all $u_1, u_2, y_1 \in R(u_1), y_2 \in R(u_2)$.
        \end{enumerate}
        The incremental secant gain of a system is also an incremental gain bound.
\end{definition}

Incremental positivity is closely related to incremental passivity -- the two are
equivalent for causal operators (this follows from an easy adaptation of the proof of
\autocite[Lemma 2, p. 200]{Desoer1975}).
Examples of incrementally positive circuit elements include resistors with
nondecreasing $i-v$ characteristics and LTI capacitors and inductors
\autocite{Chaffey2021e}.

\subsection{Series/parallel one-port circuits}

A series interconnection of two impedances $R_1$ and $R_2$ defines a new one-port
impedance,
$R_1 + R_2$:
\begin{IEEEeqnarray*}{rCl}
          v \in R_1(i) + R_2(i). 
\end{IEEEeqnarray*}
Likewise, the parallel interconnection of two admittances $G_1$ and $G_2$ defines the
one-port admittance $G_1 + G_2$:
\begin{IEEEeqnarray*}{rCl}
        i \in G_1(v) + G_2(v).
\end{IEEEeqnarray*}
Interconnecting an impedance and an admittance, either in series or in parallel,
requires one of the relations to be inverted. We will assume throughout this paper
that any relations which are added have compatible domains.  For a circuit-theoretic
interpretation of this assumption, see \autocite[Thm. 2]{Chaffey2021e}.

These interconnection rules give rise to the class of one-port circuits consisting of
arbitrary series/parallel interconnections, which have the general form shown in
Figure~\ref{fig:series-parallel-general}  (allowing admittances to be open circuits,
$\{(v, 0)\;|\; v \in \R\}$, and impedances to be short circuits, $\{(i, 0)\;|\; i \in
\R\}$).

\begin{figure}[hb]
        \centering
        \includegraphics{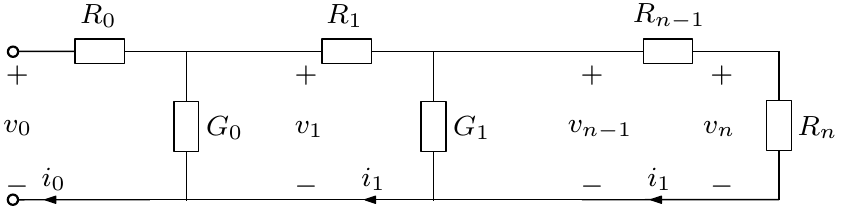}
        \caption{Circuit structure with nested series and parallel interconnections.
        $R_n$ represents an impedance, $G_n$ represents an admittance.}
\label{fig:series-parallel-general}
\end{figure}

The $v-i$ relation of this general circuit is given by
\begin{IEEEeqnarray*}{rCl}
v_0 &=& (R_0 + (G_0 + (\ldots + (R_{n-1} + R_n)^{-1}\ldots
)^{-1})^{-1})(i_0).%\label{eq:nested_relation}
\end{IEEEeqnarray*} 

Note that this form generalizes a continued fraction of transfer functions: when all the elements $R_j$ and $G_j$ are LTI,  
taking the Laplace transform gives
\begin{IEEEeqnarray*}{rCl}
v_0(s) = R_0(s) +\cfrac{1}{G_0(s) + \cfrac{1}{\ldots + \cfrac{1}{R_{n-1}(s)+R_{n}(s)}}}\, i_0(s).
\end{IEEEeqnarray*}
When every element is incrementally positive, such circuits are closely related to
the splitting algorithms of monotone operator theory, and may be solved efficiently
using the recently introduced class of \emph{nested splitting algorithms}
\autocite{Chaffey2021e}.

\subsection{Truncated approximate circuits}

Consider the problem of approximating the one-port circuit in
Figure~\ref{fig:series-parallel-general},  whose port behavior 
is given by $v_0=C(i_0)$, by a simpler one-port.
A natural solution is to delete the circuit elements furthest from the port terminals,
as they contribute the least to the port behavior of the circuit.   
This gives a truncated circuit $\hat C$ with $i-v$ relation, defined by % Alberto
\begin{IEEEeqnarray*}{lCl}
\hat{v}_0 %= \hat{C}(i_0)\\ % Alberto
 =  (R_0 + (G_0 + (\ldots + (R_{r-1} + R_r)^{-1}\ldots
)^{-1})^{-1})(i_0),\label{eq:truncated_circuit}
\end{IEEEeqnarray*}
where $r < n$.
This procedure corresponds to truncating the
continued fraction of the circuit.  The relation from current to voltage has been
chosen arbitrarily, and it is straightforward to verify that the truncation of
$C^{-1}$ is $\hat{C}^{-1}$.  We will also consider the case where the final impedance
$R_r$ is modified to some $\hat{R}_r$ -- for example, the lumped resistance which
remains when only capacitors are removed from the circuit in
Figure~\ref{fig:example_circuit_1}.

In the case that all the circuit elements $R_j$, $G_j$ are incrementally positive,
both the original and truncated circuits are automatically incrementally positive --
this follows from the preservation of incremental positivity under series and
parallel interconnections \autocite[Prop. 1]{Chaffey2021e}.  In the case that the
circuit elements have stronger positivity properties, these are also preserved in the
truncated circuit, as shown in the following proposition.

\begin{proposition}\label{prop:strong_preservation}
        Consider  the circuit in Figure~\ref{fig:series-parallel-general}.  
        Suppose that each admittance $G_j$ is input-strictly incrementally positive,
        and each impedance $R_j$ is output-strictly incrementally positive.  Then the
        circuit is input-strictly incrementally positive from voltage to current, and 
        output-strictly incrementally positive from current to voltage.
\end{proposition}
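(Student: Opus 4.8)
My plan is to prove the statement by induction on the nested series/parallel structure of Figure~\ref{fig:series-parallel-general}. First I would observe that the two conclusions are equivalent: writing $v_0 = C(i_0)$, a relation $C$ is $\beta$-output-strictly incrementally positive from current to voltage exactly when $C^{-1}$ is $\beta$-input-strictly incrementally positive from voltage to current, since both assertions read $\langle v_1 - v_2 \mid i_1 - i_2\rangle \geq \beta\norm{v_1-v_2}^2$ for all $(i_1,v_1),(i_2,v_2)\in C$. So it is enough to show that $C$, viewed as a relation from current to voltage, is cocoercive with a strictly positive modulus.

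The induction relies on two elementary facts about relations on $L_2$ (with compatible domains, as assumed throughout the paper). (i) \emph{Relational inversion interchanges coercivity and cocoercivity}: if an impedance $Z$ is cocoercive then $Z^{-1}$ is coercive with the same modulus, and if an admittance $Y$ is coercive then $Y^{-1}$ is cocoercive with the same modulus; both follow directly from the definitions by exchanging the roles of input and output. (ii) \emph{Addition preserves these properties within a type}: the sum of a $\mu_1$-coercive and a $\mu_2$-coercive relation is $(\mu_1+\mu_2)$-coercive (the two estimates simply add), and the sum $A+B$ of an $\alpha$- and a $\beta$-cocoercive relation is again cocoercive, since for $y=a+b$, $y'=a'+b'$ with $(u,a),(u',a')\in A$ and $(u,b),(u',b')\in B$,
\begin{IEEEeqnarray*}{rCl}
\langle u-u' \mid y-y'\rangle
&=& \langle u-u'\mid a-a'\rangle + \langle u-u'\mid b-b'\rangle\\
&\geq& \alpha\norm{a-a'}^2 + \beta\norm{b-b'}^2
\;\geq\; \tfrac{1}{2}\min(\alpha,\beta)\,\norm{y-y'}^2,
\end{IEEEeqnarray*}
where the last step uses $\norm{y-y'}^2 \leq 2\norm{a-a'}^2 + 2\norm{b-b'}^2$.

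With (i) and (ii) in hand, a structural induction on the nesting in Figure~\ref{fig:series-parallel-general} finishes the argument: every impedance appearing in the construction of $C$ is either one of the given $R_j$ (cocoercive by hypothesis), or a sum of two impedances already known to be cocoercive (cocoercive by (ii)), or the inverse of an admittance already known to be coercive (cocoercive by (i)); dually, every admittance appearing is one of the given $G_j$, or a sum of two coercive admittances, or the inverse of a cocoercive impedance. Working outward from the innermost element $R_{n-1}+R_n$, it follows that $C$ is a cocoercive impedance with positive modulus, hence $C^{-1}$ is coercive -- which is the proposition. The same reasoning applies verbatim when the final impedance $R_r$ is replaced by a modified output-strictly incrementally positive $\hat{R}_r$, yielding the corresponding statement for the truncated circuit $\hat{C}$.

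I expect the only real content to be fact (ii) for cocoercivity, together with the structural observation that it is usable here: cocoercivity is \emph{not} preserved when a cocoercive relation is added to a merely monotone one, so it is essential that along the nesting impedances are only ever summed with impedances and admittances only with admittances, and that every one of these summands carries a strictly positive modulus. This is precisely why the hypotheses require every $R_j$ and every $G_j$ to be strict; the remainder is routine bookkeeping with the definitions.
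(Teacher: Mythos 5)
Your proof is correct and follows essentially the same route as the paper: a structural induction on the nesting, driven by the facts that sums preserve coercivity (of admittances) and cocoercivity (of impedances) and that relational inversion swaps the two properties. The only difference is that the paper simply cites these closure facts to the monotone-operator literature, whereas you also supply a (valid, if not sharp) proof of the cocoercive-sum estimate; your duality observation between the two conclusions is likewise consistent with the paper's implicit use of the inversion rule.
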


\begin{proof}
        The proof follows from induction, and the following basic results (see, for
        example, \autocite[Chap. 2]{Ryu2021a}).  Let $A$ and $B$ be relations on an arbitrary
        Hilbert space.  Then:
        \begin{enumerate}
                \item If $A$ and $B$ are input-strictly incrementally positive, then
                        $A + B$ is input-strictly incrementally positive.
                \item If $A$ and $B$ are output-strictly incrementally positive, then $A + B$ is output-strictly
                        incrementally positive.
                \item If $A$ is input-strictly incrementally positive, $A^{-1}$ is
                        output-strictly incrementally positive.
                \item If $A$ is output-strictly incrementally positive, $A^{-1}$ is
                        input-strictly incrementally positive.\qedhere
        \end{enumerate}
\end{proof}

As the series/parallel structure of a circuit is preserved as elements are removed,
Proposition~\ref{prop:strong_preservation} shows that truncation preserves strict
incremental positivity.  In the following section, we will develop several
numerical estimates of the accuracy of an approximation, using the circuit's SRG.

\section{Graphical truncation errors}\label{sec:verify}
We begin this section with a brief
overview of the theory of SRGs.  For a full treatment, we refer the reader to
\textcite{Ryu2021}.

\subsection{Scaled Relative Graphs}

The SRG of an operator is a region in the extended complex plane from which the dynamic
properties of the operator can be easily read.  We define the SRG formally as follows.

The angle between $u, y \in L_2$ is defined as
\begin{IEEEeqnarray*}{rCl}
        \angle(u, y) \coloneqq \acos \frac{\bra{u}\ket{y}}{\norm{u}\norm{y}} \in [0,
        \pi). 
\end{IEEEeqnarray*}

Let $R \subseteq L_2 \times L_2$.  Given $u_1, u_2 \in
L_2$, $u_1 \neq u_2$, we define the set of complex numbers $z_R(u_1, u_2)$ by
\begin{IEEEeqnarray*}{rCl}
        \left\{\frac{\norm{y_1 - y_2}}{\norm{u_1 - u_2}} e^{\pm j\angle(u_1 -
u_2, y_1 - y_2)}\middle|\; y_1 \in R(u_1), y_2 \in R(u_2) \right\}.
\end{IEEEeqnarray*}
If $u_1 = u_2$ and there exist corresponding
outputs $y_1 \in R(u_1), y_2 \in R(u_2), y_1 \neq y_2$, then
$z_R(u_1, u_2)$ is defined to be $\{\infty\}$.  If $R$ is single valued at $u_1$,
$z_R(u_1, u_1)$ is the empty set.

The \emph{Scaled Relative Graph} (SRG) of $R$ over $L_2$ is then given by
\begin{IEEEeqnarray*}{rCl}
        \srg{R} \coloneqq \bigcup_{u_1, u_2 \in\, L_2}  z_R(u_1, u_2).
\end{IEEEeqnarray*}

\begin{proposition}\label{prop:finite_gain}
        The SRG of a relation belongs to one of the regions illustrated below if
                and only if the relation obeys the corresponding input/output
        property. Clockwise from top left: finite incremental gain,
$1/\gamma$-output strict incremental positivity, $\mu$-input strict incremental
positivity.%
\begin{center}
        \includegraphics{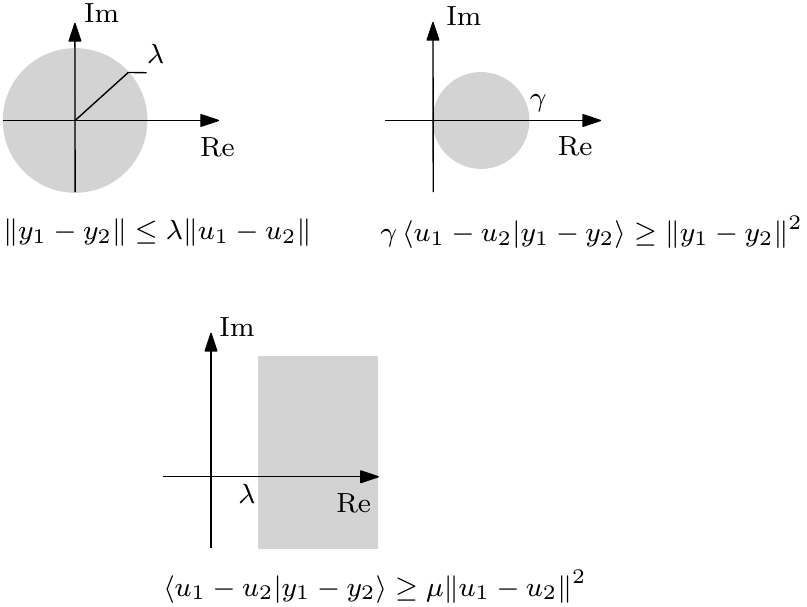}
\end{center}
\end{proposition}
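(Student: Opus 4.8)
The plan is to reduce all three equivalences to a single algebraic identity applied to a generic generator of the SRG. Fix $u_1 \neq u_2$ in $L_2$ and outputs $y_i \in R(u_i)$, and abbreviate $\Delta u = u_1 - u_2$, $\Delta y = y_1 - y_2$, $r = \norm{\Delta y}/\norm{\Delta u} \geq 0$, and $\theta = \angle(\Delta u, \Delta y) \in [0,\pi)$, so the two complex numbers contributed by this pair to $\srg{R}$ are $z = r e^{\pm j \theta}$. The only facts I need are the elementary identities $|z| = r$ and $\Re z = r\cos\theta = \bra{\Delta u}\ket{\Delta y}/\norm{\Delta u}^2$, the second being just the definition of the angle rearranged. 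Since all three candidate regions are symmetric about the real axis, the sign ambiguity in $e^{\pm j\theta}$ will play no role.

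Next I would divide each defining inequality by $\norm{\Delta u}^2 > 0$ and substitute: (i) $\norm{\Delta y} \leq \lambda \norm{\Delta u}$ becomes $r \leq \lambda$, i.e. $|z| \leq \lambda$, the disk of radius $\lambda$ about the origin; (ii) $\gamma \bra{\Delta u}\ket{\Delta y} \geq \norm{\Delta y}^2$ becomes $\gamma r \cos\theta \geq r^2$, i.e. $\gamma \Re z \geq |z|^2$, which upon completing the square is $|z - \gamma/2| \leq \gamma/2$, the disk of radius $\gamma/2$ centred at $\gamma/2$; (iii) $\bra{\Delta u}\ket{\Delta y} \geq \mu \norm{\Delta u}^2$ becomes $r\cos\theta \geq \mu$, i.e. $\Re z \geq \mu$, the closed right half-plane. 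Each of these is an \emph{equivalence} between the inequality holding for the pair $(u_1, u_2)$ and the point(s) $z$ lying in the stated region, from which both implications of the proposition follow at once: if the input/output property holds it holds for every admissible pair, so every point of $\srg{R}$ of this form lies in the region; conversely, if $\srg{R}$ is contained in the region, running the computation backwards for an arbitrary pair recovers the property.

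What remains, and what I expect to be the only delicate point, is the treatment of the degenerate generators where $u_1 = u_2$ but $y_1 \neq y_2$, which contribute the point $\infty$. For finite incremental gain and for $1/\gamma$-output strict incremental positivity, the defining inequality evaluated at $\Delta u = 0$ forces $\Delta y = 0$; hence such relations are single-valued, $\infty \notin \srg{R}$, and $\infty$ lies outside the corresponding bounded disk, so the equivalence is unaffected. For $\mu$-input strict incremental positivity the inequality at $u_1 = u_2$ reads $0 \geq 0$ and is automatic, in agreement with the convention that $\infty$ is a point of the closed half-plane $\Re z \geq \mu$ in the extended complex plane. Pinning down these corner cases (and the $\infty$ convention for the half-plane region) is the main obstacle; the algebraic heart of the argument is just the identity $\Re z = \bra{\Delta u}\ket{\Delta y}/\norm{\Delta u}^2$ followed by completing the square.
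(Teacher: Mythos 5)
Your proposal is correct, and it is exactly the standard argument: the paper itself states this proposition without proof (it is imported from the SRG literature it cites), and the canonical derivation there is precisely your reduction via $|z|=\norm{\Delta y}/\norm{\Delta u}$ and $\Re z = \bra{\Delta u}\ket{\Delta y}/\norm{\Delta u}^2$, followed by completing the square for the cocoercivity disk. Your handling of the degenerate pairs ($\infty$ excluded for the two bounded disks, admitted for the half-plane) is also the right bookkeeping; nothing is missing.
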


\subsection{SRGs of series/parallel one-ports}

Connecting elements in series and parallel involves adding and inverting their
relations.  In this section, we describe the corresponding graphical operations on
their SRGs.

If $C, D \subseteq \C$, we define the
operation $C + D$ to be the Minkowski sum of $C$ and
$D$, that is,
\begin{IEEEeqnarray*}{rCl}
        C + D \coloneqq \{c + d\, | \, c \in C, d \in D\}.
\end{IEEEeqnarray*}

We define inversion in the  extended  complex plane by $re^{j\omega} \mapsto (1/r)e^{j\omega}$.
This maps points outside the unit circle to the
inside, and vice versa. The points $0$ and $\infty$ are exchanged under inversion. The complex conjugate would normally be taken; this is left
out for convenience, and has no effect as the SRG is symmetric about the real axis.

Define the line segment between $z_1, z_2 \in \C$ as $[z_1, z_2] \coloneqq \{\alpha
                                z_1 + (1 - \alpha) z_2\, |\, \alpha \in [0, 1]\}$.
                                A region $G \subseteq \mathbb{C}$ is said to
satisfy the \emph{chord property} if $z \in G$ implies $[z,
\bar z] \subseteq G$.  If $A$ is a relation, we denote by $\overline{\srg{A}}$ any
region in $\mathbb{C}$ such that $\srg{A} \subseteq \overline{\srg{A}}$ and
$\overline{\srg{A}}$ satisfies the chord property.

\begin{proposition}\label{prop:negation}
        If $A$ is an operator, then
                $\srg{-A} = -\srg{A}$.
        
\end{proposition}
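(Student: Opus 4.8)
The plan is to establish the identity pointwise and then take unions. Recall that $-A$ is the relation $\{(u,-y)\mid (u,y)\in A\}$, so $y_1\in A(u_1),\, y_2\in A(u_2)$ if and only if $-y_1\in(-A)(u_1),\,-y_2\in(-A)(u_2)$. Hence for every fixed pair $u_1,u_2\in L_2$ the output pairs entering the definition of $z_{-A}(u_1,u_2)$ are exactly the negatives of those entering $z_A(u_1,u_2)$, and I will show this forces $z_{-A}(u_1,u_2)=-\,z_A(u_1,u_2)$. Since negating a set commutes with taking unions, the conclusion $\srg{-A}=\bigcup_{u_1,u_2}z_{-A}(u_1,u_2)=-\bigcup_{u_1,u_2}z_A(u_1,u_2)=-\srg{A}$ will then follow immediately.

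For the generic case $u_1\neq u_2$ with a choice of outputs $y_1,y_2$, first note that the magnitude $\norm{y_1-y_2}/\norm{u_1-u_2}$ is invariant under $y_i\mapsto-y_i$ (and if $y_1=y_2$ the contributed point is $0=-0$, so there is nothing to check). For the phase I would combine $\bra{u_1-u_2}\ket{-(y_1-y_2)}=-\bra{u_1-u_2}\ket{y_1-y_2}$ with $\acos(-x)=\pi-\acos(x)$ to obtain $\angle(u_1-u_2,-(y_1-y_2))=\pi-\angle(u_1-u_2,y_1-y_2)$, and then invoke the elementary identity $e^{\pm j(\pi-\theta)}=-e^{\mp j\theta}$. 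This shows that the two complex numbers that $(-y_1,-y_2)$ contributes to $z_{-A}(u_1,u_2)$ are precisely the negatives of the two that $(y_1,y_2)$ contributes to $z_A(u_1,u_2)$, with the two sign branches interchanged; since both branches are always included, $z_{-A}(u_1,u_2)=-z_A(u_1,u_2)$ as sets.

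The remaining step is to dispatch the degenerate cases in the definition of $z$. If $u_1=u_2$ but $A$ has distinct outputs there, then $z_A(u_1,u_2)=\{\infty\}$; the negated outputs are still distinct, so $z_{-A}(u_1,u_2)=\{\infty\}=-\{\infty\}$ under the extended-plane convention $-\infty=\infty$. If $A$ is single valued at $u_1$, so is $-A$, and both $z$-sets equal $\emptyset=-\emptyset$. Assembling these cases with the generic one gives the claim. I do not expect a genuine obstacle; the only point needing a moment's care is that the $\pm j$ in the definition swaps the two angle branches under negation, which is harmless precisely because each $z$-set — and hence the SRG — is symmetric about the real axis, so the equality holds at the level of sets and not merely up to complex conjugation.
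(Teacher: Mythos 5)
Your proof is correct and is the standard direct verification: the paper itself states Proposition~\ref{prop:negation} without proof, deferring to \textcite{Ryu2021}, where essentially this same pointwise computation (norm invariance under output negation, plus $\angle(u,-y)=\pi-\angle(u,y)$ and $e^{\pm j(\pi-\theta)}=-e^{\mp j\theta}$, with the two sign branches swapping harmlessly) is used. Your handling of the degenerate cases ($y_1=y_2$, $u_1=u_2$, single-valuedness) is also fine, so there is nothing to add.
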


\begin{proposition}\label{prop:inversion}
        If $A$ is an operator, then
                $\srg{A^{-1}} = (\srg{A})^{-1}$.
\end{proposition}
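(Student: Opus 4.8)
The plan is to prove the set identity $\srg{A^{-1}} = (\srg{A})^{-1}$ by checking, for an arbitrary point $z$ of the extended complex plane, that $z \in \srg{A}$ if and only if its image $z^{-1}$ under the inversion map lies in $\srg{A^{-1}}$. Three elementary facts will do the work: (i) $A^{-1}$ is $A$ with the roles of input and output swapped, $(y,u) \in A^{-1} \iff (u,y) \in A$, and $(A^{-1})^{-1} = A$; (ii) the $L_2$-angle is symmetric, $\angle(a,b) = \angle(b,a)$, because the inner product is symmetric; and (iii) the inversion $re^{j\omega} \mapsto (1/r)e^{j\omega}$ replaces the magnitude $\norm{y_1-y_2}/\norm{u_1-u_2}$ by its reciprocal $\norm{u_1-u_2}/\norm{y_1-y_2}$, and exchanges $0$ with $\infty$.

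First I would handle a finite nonzero $z \in \srg{A}$: it comes from a pair $(u_1, y_1), (u_2, y_2) \in A$ with $u_1 \ne u_2$ and $y_1 \ne y_2$, with $|z| = \norm{y_1-y_2}/\norm{u_1-u_2}$ and $\arg z = \pm \angle(u_1-u_2, y_1-y_2)$. Reading the same pair as a pair of $A^{-1}$ and invoking the symmetry of the angle, the number $\big(\norm{u_1-u_2}/\norm{y_1-y_2}\big)\, e^{\pm j\angle(y_1-y_2,\,u_1-u_2)}$ belongs to $z_{A^{-1}}(y_1,y_2)$ and equals $z^{-1}$; hence $z^{-1} \in \srg{A^{-1}}$.

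Next come the two degenerate cases. If $0 \in \srg{A}$, some pair $(u_1,y_1),(u_2,y_2) \in A$ has $u_1 \ne u_2$ but $y_1 = y_2$; the corresponding pair of $A^{-1}$ then has equal inputs but distinct outputs, so by definition $z_{A^{-1}}(y_1,y_1) = \{\infty\}$ and $\infty = 0^{-1} \in \srg{A^{-1}}$. Dually, if $\infty \in \srg{A}$, some pair has $u_1 = u_2$ with $y_1 \ne y_2$, whose $A^{-1}$ image has distinct inputs and equal outputs and so contributes $0 = \infty^{-1}$ to $\srg{A^{-1}}$. This establishes $z \in \srg{A} \implies z^{-1} \in \srg{A^{-1}}$ in every case; applying the same implication with $A$ replaced by $A^{-1}$ and using $(A^{-1})^{-1} = A$ yields the reverse implication, and the proposition follows.

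The step I expect to require the most care is the bookkeeping in the degenerate cases: a $0$ in one SRG must be matched against the ``equal input / distinct output'' clause that produces $\infty$ in the other, and conversely, so one has to keep careful track of which of $u_1 = u_2$ and $y_1 = y_2$ holds and verify this is exactly consistent with the convention $0 \leftrightarrow \infty$ built into the inversion map. The remainder is immediate from symmetry of the inner product and the definition of inversion, so no genuinely hard computation is involved.
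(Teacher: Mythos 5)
Your proof is correct and is essentially the standard argument: the paper itself states this proposition without proof, deferring to the cited reference (Ryu et al.), where the identity is likewise verified directly from the definition by swapping the roles of inputs and outputs, using the symmetry of the inner product in the angle, and matching the $0 \leftrightarrow \infty$ convention. Your careful handling of the degenerate cases and the involution trick for the reverse inclusion are exactly the right bookkeeping; nothing is missing.
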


\begin{proposition}\label{prop:summation}
        Let $A$ and $B$ be relations whose SRGs are bounded. Then
                        $\srg{A + B} \subseteq \srg{A} +
                        \overline{\srg{B}}$.
\end{proposition}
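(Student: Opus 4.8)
The plan is to show that every point of $\srg{A+B}$ coming from a pair $u_1 \neq u_2$ can be written as a point of $\srg{A}$ plus a point of $\overline{\srg{B}}$, realising the $A$-part exactly and absorbing all the slack into the chord-closed $B$-part. Fix $u_1 \neq u_2$, put $p \coloneqq u_1 - u_2 \neq 0$, and let $w_i = y_i + z_i \in (A+B)(u_i)$ with $y_i \in A(u_i)$, $z_i \in B(u_i)$; write $\Delta y \coloneqq y_1 - y_2$, $\Delta z \coloneqq z_1 - z_2$, $\Delta w \coloneqq \Delta y + \Delta z$. The corresponding point of $z_{A+B}(u_1,u_2)$ is $\zeta = (\norm{\Delta w}/\norm{p})\,e^{\pm j\angle(p,\Delta w)}$, and it suffices to treat the $+$ sign, the $-$ sign being identical after conjugating the choices made below. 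Boundedness of the SRGs rules out $\infty$: if $A+B$ were multivalued at some point, then $A$ or $B$ would be too, forcing $\infty$ into $\srg{A}$ or $\srg{B}$; hence $z_{A+B}(u,u) = \emptyset$ for all $u$, and $\srg{A+B}$ is the union of the sets $z_{A+B}(u_1,u_2)$ over $u_1 \neq u_2$.

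First I would record the geometric content of the real and imaginary parts of an SRG point relative to $p$: for $0 \neq x \in L_2$, the point $(\norm{x}/\norm{p})e^{j\angle(p,x)}$ has real part $\bra{p}\ket{x}/\norm{p}^2$ and imaginary part $\norm{x_\perp}/\norm{p}$, where $x_\perp \coloneqq x - (\bra{p}\ket{x}/\norm{p}^2)\,p$ is the component of $x$ orthogonal to $p$ (immediate from $\cos\angle(p,x) = \bra{p}\ket{x}/(\norm{p}\norm{x})$ and $\norm{x_\perp}^2 = \norm{x}^2\sin^2\angle(p,x)$). Now choose $\zeta_A \coloneqq (\norm{\Delta y}/\norm{p})e^{j\angle(p,\Delta y)} \in \srg{A}$ (take $\zeta_A \coloneqq 0 \in \srg{A}$ if $\Delta y = 0$). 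Since $\mathrm{Re}\,\zeta = \bra{p}\ket{\Delta y}/\norm{p}^2 + \bra{p}\ket{\Delta z}/\norm{p}^2$ while $\mathrm{Re}\,\zeta_A = \bra{p}\ket{\Delta y}/\norm{p}^2$, the number $\zeta - \zeta_A$ has real part $\bra{p}\ket{\Delta z}/\norm{p}^2$, and its imaginary part is $(\norm{(\Delta w)_\perp} - \norm{(\Delta y)_\perp})/\norm{p}$. As orthogonal projection onto $\mathrm{span}\{p\}^{\perp}$ is linear, $(\Delta w)_\perp = (\Delta y)_\perp + (\Delta z)_\perp$, so the reverse triangle inequality gives $\big|\,\norm{(\Delta w)_\perp} - \norm{(\Delta y)_\perp}\,\big| \leq \norm{(\Delta z)_\perp}$.

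To conclude, set $\zeta_B \coloneqq (\norm{\Delta z}/\norm{p})e^{j\angle(p,\Delta z)} \in \srg{B}$. Its conjugate $\overline{\zeta_B}$ also lies in $\srg{B}$, so the chord property yields $[\zeta_B,\overline{\zeta_B}] \subseteq \overline{\srg{B}}$; this segment is exactly the set of complex numbers with real part $\bra{p}\ket{\Delta z}/\norm{p}^2$ and imaginary part in $[-\norm{(\Delta z)_\perp}/\norm{p},\,\norm{(\Delta z)_\perp}/\norm{p}]$. By the previous paragraph $\zeta - \zeta_A$ has precisely this real part and an imaginary part in this range, hence $\zeta - \zeta_A \in \overline{\srg{B}}$. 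Therefore $\zeta = \zeta_A + (\zeta - \zeta_A) \in \srg{A} + \overline{\srg{B}}$, and taking the union over all $u_1 \neq u_2$ gives $\srg{A+B} \subseteq \srg{A} + \overline{\srg{B}}$.

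The step I expect to be the crux — and the reason the chord closure sits on exactly one summand — is seeing why $\zeta - \zeta_A$ can fail to be a genuine SRG point of $B$: it has the right ``angle data'' (its real part matches that of $B$'s point), but its distance to the real axis can be strictly smaller than $\norm{(\Delta z)_\perp}/\norm{p}$, so a priori it only lies on the chord of $\srg{B}$. The insight is that the reverse triangle inequality is exactly the estimate that controls this, and that the discrepancy is purely imaginary, so it is absorbed by a single application of the chord property; the $\pm$-sign bookkeeping and the degenerate cases $\Delta y = 0$, $\Delta w = 0$ are then routine.
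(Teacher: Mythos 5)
Your proof is correct and is essentially the standard argument: the paper states Proposition~\ref{prop:summation} without proof, deferring to \textcite{Ryu2021}, and your decomposition of each increment into components parallel and orthogonal to $u_1-u_2$ --- realising the $A$-point exactly and absorbing the orthogonal slack, controlled by the reverse triangle inequality, into the chord $[\zeta_B,\overline{\zeta_B}]\subseteq\overline{\srg{B}}$ --- is precisely the argument given there. The handling of the $\pm$ sign, the degenerate increments, and the exclusion of $\infty$ via boundedness are all sound.
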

Unbounded SRGs can be allowed by setting $\srg{A + B} = \{\infty\}$ if
$\srg{A} = \varnothing$ and $\infty \in \srg{B}$.

\subsection{Graphical truncation errors for series/parallel one-ports}

To evaluate the error introduced by truncating a circuit, we can compute a bounding SRG for the error relation, $C - \hat C$, which
maps $u$ to $e \coloneqq y - \hat y$.  It
follows from Proposition~\ref{prop:finite_gain} that the maximum modulus of $\srg{C -
\hat C}$ bounds the incremental error gain,
\begin{IEEEeqnarray*}{rCl}
        \sup_{u_1, u_2 \in L_2, u_1 \neq u_2}\frac{\norm{e_1 - e_2}}{\norm{u_1 - u_2}}.
\end{IEEEeqnarray*}
If this quantity is bounded, the error relation is continuous on $L_2$: small changes
in the input result in small changes in the error.
Under the assumption that $0 \in C(0)$ and $0 \in \hat C(0)$, the incremental error
 gain,  in turn,    bounds
\begin{IEEEeqnarray*}{rCl}
        \sup_{u \in L_2, \norm{u} \neq 0} \frac{\norm{y - \hat y}}{\norm{u}}.
\end{IEEEeqnarray*}
If this quantity is bounded, the error relation is bounded on $L_2$: bounded inputs
result in bounded errors.

Using the SRGs of the original and truncated circuits, we can furthermore measure the
error in various dynamic properties, such as incremental gain and positivity.
Figure~\ref{fig:secant_estimate}, for example, shows how the error in the secant gain
can be measured from the original and truncated SRGs.

\begin{figure}[hb]
        \centering
        \includegraphics{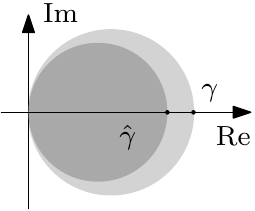}
        \caption{Suppose the light grey region bounds the SRG of the original
        circuit, and the dark grey region bounds the SRG of the truncated circuit.
The original circuit has a secant gain of $\gamma$, and the truncated circuit has a
secant gain of $\hat \gamma$.  The error $\gamma - \hat \gamma$ in the truncated
secant gain is the distance between the two marked points on the real axis.}%
        \label{fig:secant_estimate}
\end{figure}

\section{Example revisited}\label{sec:example}

Armed with the graphical tools of the previous section, we revisit the example of
Section~\ref{sec:illustrative}.  We begin by deriving an SRG for the  circuit in
Figure~\ref{fig:example_circuit_1},    for an arbitrary number of units $n$.  Recall
that $R$ is an arbitrary nonlinear resistor which satisfies the incremental sector
bound \eqref{eq:sector},
and suppose the capacitor and linear resistor both have unit value, $C = G = 1$.  The
SRGs of $R$ and $G_{RC}$ are illustrated below (following \autocite[Thm. 4, Prop.
9]{Chaffey2021c}).

\begin{center}
        \includegraphics{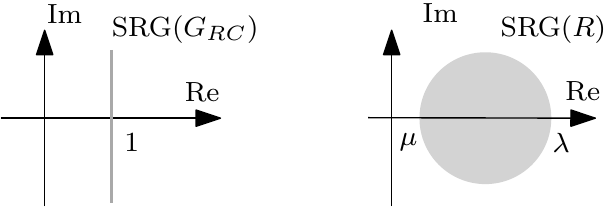}
\end{center}

We then apply the SRG sum and inversion rules (Propositions~\ref{prop:summation}
and~\ref{prop:inversion}) to obtain the SRG for the $i-v$ relation of a circuit with
$n = 1$, shown below (incidentally, we also obtain an SRG for the $v-i$ relation).

\begin{center}
        \includegraphics{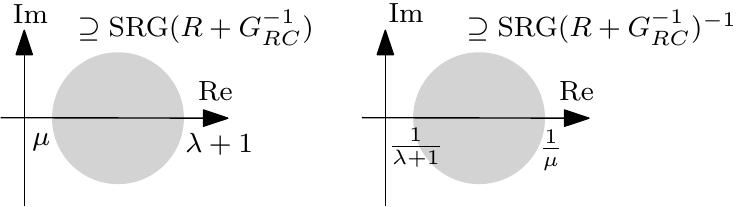} % Alberto
\end{center}

\begin{center}
        \includegraphics{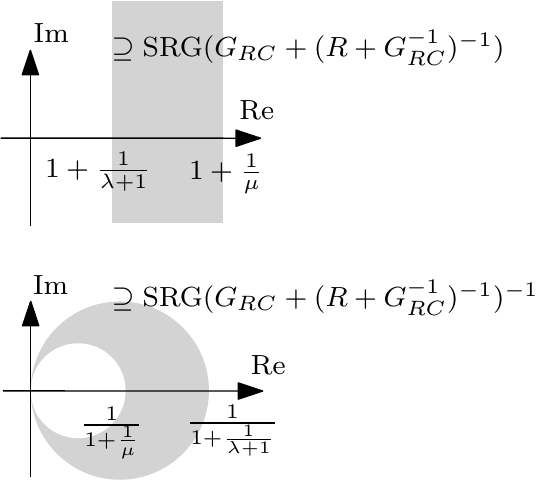} % Alberto
\end{center}

Carrying on with this procedure, we obtain the following SRG for a circuit with $n$
units.

\begin{center}
        \includegraphics{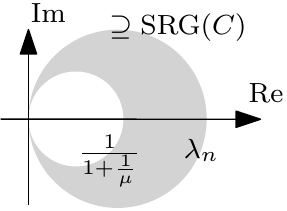}
\end{center}

$\lambda_n$ is defined recursively\footnote{If $\lambda = 1$,   $\lambda_n \to 1/\phi = 2/(1+\sqrt{5})$,  the inverse of the   golden
ratio, as $n\to \infty$.} by
\begin{IEEEeqnarray*}{rCl}
\lambda_n &=&   \begin{cases}
                \frac{1}{1 + \frac{1}{\lambda + 1}} & n = 1\\
                \frac{1}{1 + \frac{1}{\lambda + \lambda_{n-1}}} & n > 1.
                \end{cases}
\end{IEEEeqnarray*}

Repeating this procedure for a circuit $\hat C$ with the last $n-r$ capacitors removed
produces an identical SRG.
We can compute an SRG for
the error relation $C - \hat C$ by subtracting $\srg{\hat C}$ from $\srg{C}$.  This
is bounded by the disc illustrated below.

\begin{center}
        \includegraphics{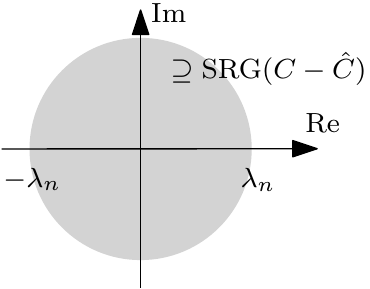}
\end{center}

It follows from Proposition~\ref{prop:finite_gain} that the error relation, which
maps $i \coloneqq i_0$ to $e \coloneqq v_0 - \hat v_0$, has an
incremental gain bound of $\lambda_n$:
\begin{IEEEeqnarray*}{rCl}
\sup_{i \in L_2, \norm{i}\neq 0} \frac{\norm{v_0 - \hat v_0}}{\norm{i}} &\leq& \sup_{i_1, i_2 \in L_2, i_1 \neq i_2}\frac{\norm{e_1 - e_2}}{\norm{i_1 - i_2 }} \leq \lambda_n. % Alberto
\end{IEEEeqnarray*}
This bound depends only on $\lambda$ and $n$, and approaches a constant as $n\to
\infty$.

Comparing the SRGs of $C$ and $\hat C$ shows that both circuits are output-strictly
incrementally passive, with an incremental secant gain of $\lambda_n$.

By way of comparison, applying the balanced truncation method presented in \autocite{Besselink2014}, with $R^{-1}(v) = \tanh(v) + v$, results in a pure
truncation of the continued fraction of the circuit, by removing $n-r$ repeated units,  
and gives an error bound 
\begin{IEEEeqnarray*}{rCl}
 \sup_{i \in L_2, \norm{i}\neq 0}\frac{\norm{v_0 - \hat v_0}}{\norm{i}}  &\leq& \frac{n - r}{(1 - \gamma)^2},
\end{IEEEeqnarray*}
where $\gamma = l \lambda$, and $l$ is the largest eigenvalue of the $n\times n$ matrix
{\small
\begin{IEEEeqnarray*}{rCl}
\begin{pmatrix}
        -2& 1 &0 &\ldots &0 \\
        1 & -2 & 1 & \ldots & 0\\
        0 & 1 & -2 & \ldots& 0\\
        \vdots & \vdots &\vdots & \ddots & \vdots\\
        0 & 0 & 0 & \ldots & 0
\end{pmatrix}.
\end{IEEEeqnarray*}
}
Note that the eigenvalue $l$ converges to $4$ as $n\to \infty$.
This bound is tighter than $\lambda_n$ for small $n$, but diverges as $n \to \infty$.
The two bounds are plotted in Figure~\ref{fig:bounds}.

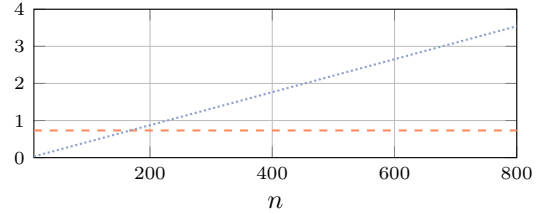
\begin{figure}[h]
        \centering
        \begin{tikzpicture}[black, every tick label/.append style={font=\scriptsize}]
        \begin{axis}[
        height = 0.2\textwidth,
        width =  0.45\textwidth, %\textwidth,
        xmin=10,xmax=800,
        ymin=-0,ymax=4,
        xlabel={$n$},
        grid
        ]
        % Plots
        \addplot [redp, thick, dashed]  table [x=n, y=s,  col
                sep=comma]{data/bounds.csv};
        \addplot [bluep, thick, densely dotted]  table [x=n, y=b,  col
                sep=comma]{data/bounds.csv};
        \end{axis}
        \end{tikzpicture}
        \vspace{-.3cm}
        \caption{Maximum modulus of $\srg{C - \hat C}$ (orange, dashed) and the error bound obtained
        from \textcite{Besselink2014} (blue, dotted).  The truncated circuit has length
$r=3$, and $\lambda=2$.}%
        \label{fig:bounds}
\end{figure}

The performance of the two truncations is compared in Figure~\ref{fig:simulation},
for an input of $i_0(t) = \sin(t)$, and Figure~\ref{fig:simulation2}, for an input of
$i_0(t) = \sin(2t)$.  Both simulations use an initial condition of $1$ V across each
capacitor.  The original circuit length is $n = 50$, and the truncated circuit length
is $r=3$.  For the method we present here, the $n-r$ nonlinear resistors which remain after
the capacitors are removed are approximated by piecewise-linear functions. 
The method we present here has lower error in both
cases, both in absolute magnitude and in phase shift.

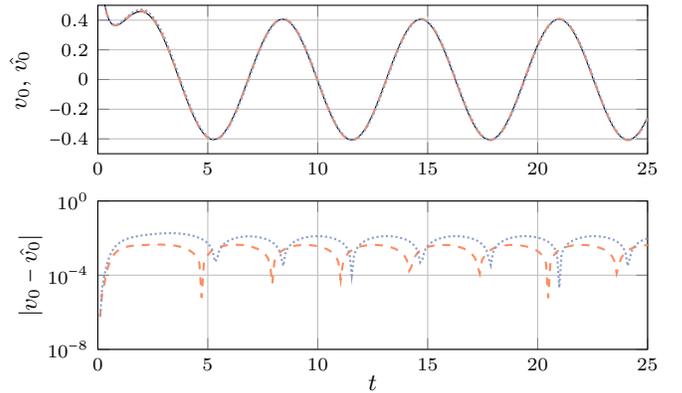
\begin{figure}[h]
\centering
\begin{tikzpicture}[black, every tick label/.append style={font=\scriptsize}]
\matrix{
\begin{axis}[
height = 0.2\textwidth,
width =  0.5\textwidth, %\textwidth,
xmin=0,xmax=25,
ymin=-0.5,ymax=0.5,
ylabel={{\small $v_0, \, \hat{v_0}$}},
grid
]
% Plots
\addplot [black]  table [x index = {0}, y index = {1},  col
        sep=comma]{data/NL_resistor_state_system_sint.csv};
\addplot [redp,  thick, dashed]  table [x index = {0}, y index = {1},  col
        sep=comma]{data/NL_resistor_output_Tom_sint3.csv};
\addplot [bluep,  thick, densely dotted]  table [x index = {0}, y index = {1},  col
        sep=comma]{data/NL_resistor_output_Bess_sint3.csv};
\end{axis}
\\
\begin{semilogyaxis}[
height = 0.2\textwidth,
width =  0.5\textwidth, %\textwidth,
xmin=0,xmax=25,
ymin=10^-8,ymax=10^0,
xlabel shift = -5 pt,
ylabel shift = -5 pt,
xlabel={{\small $t$ }},
ylabel={{\small $|v_0-\hat{v_0}|$}},
grid
]
% Plots
\addplot [redp,  thick, dashed]  table [x index = {0}, y index = {1},  col
        sep=comma]{data/NL_resistor__error_Tom_sint3.csv};
\addplot [bluep,  thick, densely dotted]  table [x index = {0}, y index = {1},  col
        sep=comma]{data/NL_resistor__error_Bess_sint3.csv};
\end{semilogyaxis}
\\[0 em]
};
\end{tikzpicture}
\centering
\vspace{-0.75cm}
\caption{Top: Time history of the output of the circuit in
Figure~\ref{fig:example_circuit_1} with $n=50$ (solid black) and those of the reduced
order models of order ${r=3}$ using the method we present here (dashed orange) and the
differential balanced truncation method of \autocite{Besselink2014} (dotted blue).  The
initial condition is $1$ V across each capacitor, and the input is $i_0(t) = \sin(t)$.  Bottom: Time history of the corresponding output errors in absolute value (logarithmic scale).}
\label{fig:simulation}
\end{figure}%

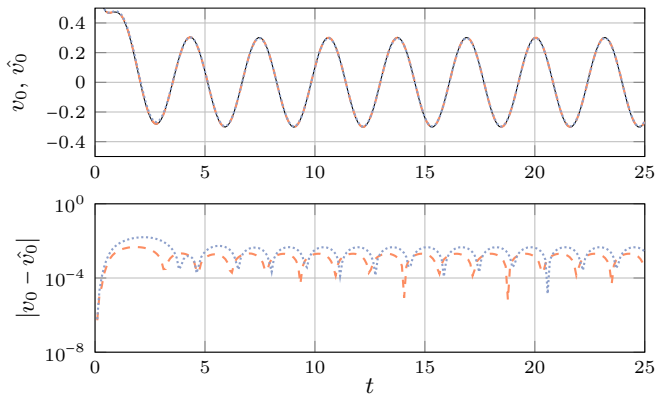
\begin{figure}[h]
\centering
\begin{tikzpicture}[black, every tick label/.append style={font=\scriptsize}]
\matrix{
\begin{axis}[
height = 0.2\textwidth,
width =  0.5\textwidth, %\textwidth,
xmin=0,xmax=25,
ymin=-0.5,ymax=0.5,
ylabel={{\small $v_0, \, \hat{v_0}$}},
grid
]
% Plots
\addplot [black]  table [x index = {0}, y index = {1},  col
        sep=comma]{data/NL_resistor_state_system.csv};
\addplot [redp,  thick, dashed]  table [x index = {0}, y index = {1},  col
        sep=comma]{data/NL_resistor_output_Tom3.csv};
\addplot [bluep,  thick, densely dotted]  table [x index = {0}, y index = {1},  col
        sep=comma]{data/NL_resistor_output_Bess3.csv};
\end{axis}
\\
\begin{semilogyaxis}[
height = 0.2\textwidth,
width =  0.5\textwidth, %\textwidth,
xmin=0,xmax=25,
ymin=10^-8,ymax=10^0,
xlabel shift = -5 pt,
ylabel shift = -5 pt,
xlabel={{\small $t$ }},
ylabel={{\small $|v_0-\hat{v_0}|$}},
grid
]
% Plots
\addplot [redp,  thick, dashed]  table [x index = {0}, y index = {1},  col
        sep=comma]{data/NL_resistor__error_Tom3.csv};
\addplot [bluep,  thick, densely dotted]  table [x index = {0}, y index = {1},  col
        sep=comma]{data/NL_resistor__error_Bess3.csv};
\end{semilogyaxis}
\\[0 em]
};
\end{tikzpicture}
\centering
\vspace{-0.6cm}
\caption{The same experiment as Figure~\ref{fig:simulation}, with $i_0(t) = \sin(2t)$.}
\label{fig:simulation2}
\end{figure}%

In contrast with the balanced truncation method of \autocite{Besselink2014}, 
$R$ can be non-differentiable and non-invertible (for example, a unit ideal
saturation), does not have to be a function (for example, an ideal diode) and need not be time-invariant; the element closest to the port can be linear or nonlinear; and the
voltage to current relation is just as easily analysed as the current to voltage
relation. 

\section{Conclusions} \label{sec:conclusions}

This paper explores a simple method for approximating systems which are modelled as
the port behavior of a series/parallel interconnection of nonlinear relations.
Deleting the elements furthest from the port corresponds to truncating a continued
fraction.  Resistances can be left in place and lumped into a single element.  
This procedure automatically guarantees the preservation of properties
such as incremental positivity (regular, input-strict and output-strict) and finite
incremental gain. 

The error introduced by the truncation can be evaluated using the SRGs of the
original and truncated systems.  Distances between the two SRGs correspond to errors
in quantities such as the incremental secant gain.  Furthermore, an SRG can be
computed for the error relation, and this gives a bound on the incremental gain from
the input to the truncation error.

A natural open question concerns the generality of the series/parallel structure:
when can a system be modelled as a series/parallel one-port?  
This is a nonlinear version of one of the earliest questions in circuit theory: when
can a transfer function be realised as the port behavior of an RLC one-port?  This
question arose in the work of Foster \autocite{Foster1924}, Cauer
\autocite{Cauer1926}, and Brune
\autocite{Brune1931}, and a constructive solution was provided by Bott and Duffin
\autocite{Bott1949}.  The Bott-Duffin construction is still the subject of active
research \autocite{Hughes2014, Hughes2017d}.  We leave the equivalent nonlinear
construction as a question for future research.

\printbibliography

\end{document}